\newtheorem{thm}{Theorem}[section]
\newtheorem{lem}[thm]{Lemma}
\newtheorem{prop}[thm]{Proposition}
\theoremstyle{definition}
\newtheorem{ass}[thm]{Assumption}
\theoremstyle{remark}
\newtheorem{rem}[thm]{Remark}
\numberwithin{equation}{section}
\newcommand{\A}{\mathcal{A}}
\newcommand{\Z}{\mathcal{Z}}
\newcommand{\X}{\mathcal{X}}
\newcommand{\Y}{\mathcal{Y}}
\newcommand{\T}{\mathcal{T}}
\newcommand{\R}{\mathbb{R}}
\newcommand{\N}{\mathbb{N}}
\newcommand{\LL}{\mathbf{L}}
\newcommand{\expec}{\mathbb{E}}
\newcommand{\M}{\mathcal{M}}
\newcommand{\PP}{\mathbb{P}}
\newcommand{\QQ}{\mathbb{Q}}
\newcommand{\F}{\mathcal{F}}
\newcommand{\ud}{\mathrm d}
\newcommand{\dbracc}[1]{[\kern-0.15em[ #1 ]\kern-0.15em]}
\newcommand{\dbraco}[1]{[\kern-0.15em[ #1 [\kern-0.15em[}
\newcommand{\dbraoc}[1]{]\kern-0.15em] #1 ]\kern-0.15em]}
\newcommand{\dbraoo}[1]{]\kern-0.15em] #1 [\kern-0.15em[}
\newcommand{\be}{\begin{equation}}
\newcommand{\ee}{\end{equation}}
\newcommand{\ba}{\begin{aligned}}
\newcommand{\ea}{\end{aligned}}
\newcommand{\ind}{\mathbb{I}}
\DeclareMathOperator{\cl}{cl}
\DeclareMathOperator{\conv}{conv}
\begin{document}

\title[Optimal investment under NUPBR]{Optimal investment with intermediate consumption under no unbounded profit with bounded risk}%

\author{Huy N. Chau}%
\address{Huy N. Chau, Alfred Renyi Institute of Mathematics, Hungarian Academy of Sciences (Hungary)}%
\email{chau@renyi.hu}%

\author{Andrea Cosso}%
\address{Andrea Cosso, Dipartimento di Matematica, Politecnico di Milano (Italy)}%
\email{andrea.cosso@polimi.it}%

\author{Claudio Fontana}%
\address{Claudio Fontana, Laboratoire de Probabilit\'es et Mod\`eles Al\'eatoires, Paris Diderot University (France)}%
\email{fontana@math.univ-paris-diderot.fr}%

\author{Oleksii Mostovyi}

\address{Oleskii Mostovyi, Department of Mathematics, University of Connecticut (USA)}%
\email{oleksii.mostovyi@uconn.edu}%

\subjclass[2010]{91G10, 93E20. \textit{JEL Classification:} C60, G1.}
\keywords{Utility maximization, arbitrage of the first kind, local martingale deflator, duality theory, semimartingale, incomplete market}%

\date{\today}%


\maketitle

\begin{abstract}
We consider the problem of optimal investment with intermediate consumption in a general semimartingale model of an incomplete market, with preferences being represented by a utility stochastic field. 
We show that the key conclusions  of  the  utility  maximization  theory  hold  under  the   assumptions  of  no unbounded profit with bounded risk (NUPBR) and of the finiteness of both primal and dual value functions.
\end{abstract}

\section{Introduction}		\label{sec:intro}

Since the pioneering work of \cite{HarrisonKreps79}, equivalent (local/sigma) martingale measures play a prominent role in the problems of pricing and portfolio optimization. 
Their existence is equivalent to the absence of arbitrage in the sense of \emph{no free lunch with vanishing risk} (NFLVR) (see \cite{DelbaenSchachermayer94,DelbaenSchachermayer98b}), and this represents the standard no-arbitrage-type assumption in the classical duality approach to optimal investment problems (see e.g. \cite{KS99,KS2003,KaratzasZitkovic03,Zitkovic05}).
In a general semimartingale setting, necessary and sufficient conditions for the validity of the key assertions of the utility maximization theory (with the possibility of intermediate consumption) have been recently established in \cite{mostovyi2015}. More specifically,  such assertions have been proven in \cite{mostovyi2015} under the assumptions that the primal and dual value functions are finite and that there exists an \emph{equivalent martingale deflator}. In particular, in a finite time horizon, the latter assumption is equivalent to the validity of NFLVR.

In this paper, we consider a general semimartingale setting with an infinite time horizon where preferences are modeled via a utility stochastic field, allowing for intermediate consumption. Building on the abstract theorems of \cite{mostovyi2015}, our main result shows that the standard assertions of the utility maximization theory hold true as long as there is \emph{no unbounded profit with bounded risk} (NUPBR) and the primal and dual value functions are finite. In general, NUPBR is weaker than NFLVR and can be shown to be equivalent to the existence of an \emph{equivalent local martingale deflator}. 
Our results give a precise and general form to a widespread meta-theorem in the mathematical finance community stating that the key conclusions of the utility maximization theory hold under NUPBR. 
Even though such a result has been proven in some specific formulations of the utility maximization problem (see the discussion below), to the best of our knowledge, it has not been justified in general semimartingale settings with an arbitrary consumption clock and a stochastic Inada utility.

The proofs rely on certain characterizations of the dual feasible set. Thus, in Lemma~\ref{lem:cons} we give a polarity {description}, show its closedness under countable convex combinations in Lemma~\ref{lem:closedness}, and demonstrate in 
Proposition~\ref{prop:NUPBR}
that nonemptyness of the set that generates the dual domain is equivalent to NUPBR. Upon that, we prove the bipolar relations between primal and dual feasible sets and apply the abstract theorems from~\cite{mostovyi2015}. As an implication of the bipolar relations, we also show how Theorem 2.2 in \cite{KS99} 
can be extended to hold under NUBPR (instead of NFLVR), see Remark~\ref{bipolar} below for details.

Neither NFLVR, nor NUPBR by itself guarantee the existence of solutions to utility maximization problems, see \cite[Example 5.2]{KS99} and \cite[Example 4.3]{KasperChristensen} for counterexamples. This is why finiteness of the value functions is needed in the formulation of our main result. 
However, it is shown in \cite{CDM15} that NUPBR holds if and only if, for every sufficiently nice deterministic utility function, the problem of maximizing expected utility from terminal wealth admits a solution under an equivalent probability measure, which can be chosen to be arbitrarily close to the original measure (see \cite[Theorem 2.8]{CDM15} for details).
Besides, NUPBR represents the minimal no-arbitrage-type assumption that allows for the standard conclusions of the theory to hold for the problem of maximization of expected utility from terminal wealth. Indeed, by \cite[Proposition 4.19]{KaratzasKardaras07}, the failure of NUPBR implies that there exists a time horizon such that the corresponding utility maximization problem  either does not have a solution, or has infinitely~many.
Our work complements these papers by providing the convex duality results under NUPBR, also allowing for stochastic preferences as well as intermediate consumption.

The problem of utility maximization without relying on the existence of martingale measures has already been addressed in the literature. 
In the very first paper \cite{Merton69} on expected utility maximization in continuous time settings, an optimal investment problem is explicitly solved even though an equivalent martingale measure does not exist in general in the infinite time horizon case.
In an incomplete It\^o process setting under a finite time horizon, \cite{KLSX} have considered 
 the problem of maximization of expected utility from terminal wealth and established 
the existence results for an optimal portfolio via convex duality theory
   without the full strength of NFLVR (see also \cite[Section 10.3]{FK09} and \cite[Section 4.6.3]{FontanaRunggaldier13}). In particular, in view of \cite[Theorem 4]{Kardaras10}, Assumption 2.3 in \cite{KLSX} is {equivalent} to the nonemptyness of the set of equivalent local martingale deflators. 
Passing from an It\^o process to a continuous semimartingale setting, the results of \cite{KS99} 
have been extended by weakening the NFLVR requirement in \cite{Larsen09} (note that \cite[Assumption 2.1]{Larsen09} is equivalent to  NUPBR). 
In a general semimartingale setting, \cite{LarsenZitkovic13} have established convex duality results for the problem of maximizing expected utility from terminal wealth (for a deterministic utility function) in the presence of trading constraints without relying on the existence of martingale measures. In particular, in the absence of trading constraints, the no-arbitrage-type requirement adopted in \cite{LarsenZitkovic13} turns out to be equivalent to NUPBR. Indeed, {\cite[Assumption 2.3]{LarsenZitkovic13} requires the $\LL^0_+$-solid hull\footnote{As usual, $\LL^0$ denotes the space of equivalence classes of real-valued random variables on the probability space $(\Omega,\F,\PP)$, equipped with the topology of convergence in probability; $\LL^0_+$ is the positive orthant of $\LL^0$.} of the set of all terminal wealths generated by admissible strategies with initial wealth $x$,  denoted by $\mathcal{C}(x)$, to be convexly compact\footnote{By \cite[Theorem 3.1]{Zit09}, a closed convex subset of $\LL^0_+$ is convexly compact if and only if it is bounded in $\LL^0$.}
for all $x\in\R$ and nonempty for some $x\in\R$. In the absence of trading constraints, \cite[Theorem 2]{Kardaras10} shows that the boundedness in $\LL^0$ of $\mathcal{C}(x)$ already implies its closedness in $\LL^0$, thus in such a framework the convex compactness of $\mathcal{C}(x)$ 
holds if and only if the NUPBR condition does.}


%


The paper is structured as follows. 
Section \ref{sec:2} begins with a description of the general setting (Subsection \ref{sec:setting}), introduces and characterizes the NUPBR condition (Subsection \ref{sec:NUPBR}) and then proceeds with the statement of the main results (Subsection \ref{sec:invest}).  Section \ref{sec:proofs} contains the proofs of our results.

\section{Setting and main results}	\label{sec:2}

\subsection{Setting}	\label{sec:setting}

Let $(\Omega,\F,(\F_t)_{t\in[0,\infty)},\PP)$ be a complete stochastic basis, with $\F_0$ being the completion of the trivial $\sigma$-algebra, and $S=(S_t)_{t\geq0}$ an $\R^d$-valued semimartingale, representing the discounted prices of $d$ risky assets\footnote{As explained
in \cite[Remark 2.2]{mostovyi2015}, there is no loss of generality in assuming that asset prices are discounted, since we allow for preferences represented by utility stochastic fields (see Section~\ref{sec:invest} below).}. 
We fix a \emph{stochastic clock} $\kappa=(\kappa_t)_{t\geq0}$, which is a nondecreasing, c\`adl\`ag adapted process such that
\be	\label{clock}
\kappa_0 = 0,
\qquad
\PP(\kappa_{\infty}>0)>0
\qquad\text{and}\qquad
\kappa_{\infty}\leq A,
\ee
for some finite constant $A$. The stochastic clock $\kappa$ represents the notion of time according to which consumption is assumed to occur. By suitably specifying the clock process $\kappa$, several different formulations of investment problems, with or without intermediate consumption, can be recovered from the present setting (see \cite[Section 2.8]{Zitkovic05} and \cite[Examples 2.5-2.9]{mostovyi2015}). 

A \emph{portfolio} is defined by a triplet $\Pi=(x,H,c)$, where $x\in\R$ represents an initial capital, $H=(H_t)_{t\geq0}$ is an $\R^d$-valued predictable $S$-integrable process representing the holdings in the $d$ risky assets and $c=(c_t)_{t\geq0}$ is a nonnegative optional process representing the consumption rate. 
The discounted value process $V=(V_t)_{t\geq0}$ of a portfolio $\Pi=(x,H,c)$ is defined as
\[
V_t := x + \int_0^tH_u\,\ud S_u - \int_0^tc_u\,\ud \kappa_u,
\qquad t\geq0.
\]
We let $\X$ be the collection of all nonnegative value processes associated to portfolios of the form $\Pi=(1,H,0)$, i.e.,
\[
\X := \left\{X\geq0 : X_t=1+\int_0^tH_u\,\ud S_u,t\geq0\right\}.
\]
For a given initial capital $x>0$, a consumption process $c$ is said to be \emph{$x$-admissible} if there exists an $\R^d$-valued predictable $S$-integrable process $H$ such that the value process $V$ associated to the portfolio $\Pi=(x,H,c)$ is nonnegative. The set of $x$-admissible consumption processes corresponding to a stochastic clock $\kappa$ is denoted by $\A(x)$.
For brevity, we let $\A:=\A(1)$.

\subsection{No unbounded profit with bounded risk}	\label{sec:NUPBR}

In this paper, we shall assume the validity of the following no-arbitrage-type condition:
\be	\tag{NUPBR}	\label{NUPBR}
\text{the set } 
\X_T := \bigl\{X_T : X\in\mathcal{X}\bigr\} 
\text{ is bounded in probability, for every }T\in\R_+.
\ee
For each $T\in\R_+$, the boundedness in probability of the set $\X_T$ has been named \emph{no unbounded profit with bounded risk} in \cite{KaratzasKardaras07} and, as shown in \cite[Proposition 1]{Kardaras10}, is equivalent to the absence of \emph{arbitrages of the first kind} on $[0,T]$. Hence, condition \eqref{NUPBR} above is equivalent to the absence of arbitrages of the first kind in the sense of \cite[Definition 1]{Kar14}. 

We define the set of \emph{equivalent local martingale deflators (ELMD)} as follows:
\begin{align*}
\mathcal{Z} := \bigl\{ Z >0 :\;& Z \text{ is a c\`adl\`ag local martingale such that  } Z_0 =1 \text{ and } \\
& ZX = (Z_tX_t)_{t\geq 0} \text{ is a local martingale for every } X \in \mathcal{X} \bigr\}.
\end{align*}

The following result is already known in the one-dimensional case in a finite time horizon (see \cite[Theorem 2.1]{Kardaras12}). The extension to the multi-dimensional infinite horizon case relies on~\cite[Theorem 2.6]{TS14} (see also \cite[Proposition 2.3]{ACDJ14}).

\begin{prop}	\label{prop:NUPBR}
Condition \eqref{NUPBR} holds if and only if $\Z\neq\emptyset$.
\end{prop}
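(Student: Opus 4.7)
The plan is to prove the two implications separately. For sufficiency, suppose $\Z \neq \emptyset$ and pick any $Z \in \Z$. For each $X \in \X$, the product $ZX$ is a nonnegative local martingale, and hence a supermartingale by Fatou's lemma, so that $\expec[Z_T X_T] \leq Z_0 X_0 = 1$ for every $T \in \R_+$. Since $Z_T > 0$ almost surely, given $\varepsilon > 0$ one can find $\delta > 0$ with $\PP(Z_T < \delta) \leq \varepsilon/2$, and then Markov's inequality applied to $Z_T X_T$ yields
\[
\sup_{X \in \X} \PP(X_T > \lambda) \;\leq\; \PP(Z_T X_T > \lambda \delta) + \PP(Z_T < \delta) \;\leq\; \frac{1}{\lambda \delta} + \frac{\varepsilon}{2},
\]
which tends to $\varepsilon/2$ as $\lambda \to \infty$, showing that $\X_T$ is bounded in probability.

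For necessity, the plan is to combine a finite-horizon existence result with a concatenation argument. First, for every fixed $T \in \R_+$, the boundedness in probability of $\X_T$ is equivalent, by the multi-dimensional Takaoka--Schweizer theorem \cite[Theorem 2.6]{TS14}, to the existence of a strictly positive c\`adl\`ag local martingale $Z^T$ with $Z^T_0 = 1$ such that $Z^T X$ is a local martingale on $[0,T]$ for every $X \in \X$. Applying this to a sequence of horizons $T_n \uparrow \infty$ yields a family $(Z^{(n)})_{n \geq 1}$ of local martingale deflators, each defined on $[0, T_n]$.

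The main obstacle is the passage from these compact-interval deflators to a single deflator defined on the whole half-line. The strategy, following \cite[Proposition 2.3]{ACDJ14}, is to paste the $Z^{(n)}$ multiplicatively. One inductively defines $Z$ on $[T_{n-1}, T_n]$ by setting $Z_t := Z_{T_{n-1}} \cdot (Z^{(n)}_t / Z^{(n)}_{T_{n-1}})$, which is a strictly positive local martingale starting from $Z_{T_{n-1}}$ at time $T_{n-1}$, and concatenating. The delicate step I expect to be the most technical is verifying that (i) $Z$ is a genuine local martingale on the whole $[0, \infty)$ and (ii) $ZX$ is a local martingale on $[0, \infty)$ for every $X \in \X$, since the reducing stopping times operate separately on each subinterval and must be amalgamated into a single sequence tending to infinity. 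This can be done by a standard diagonal localization exploiting the fact that $Z$ is bounded away from zero and infinity on each $[0, T_n]$ (after localization of $Z^{(n)}$), so that the local martingale property is preserved at each pasting time $T_{n-1}$ and can be patched globally. Once such a global $Z$ is constructed, it belongs to $\Z$ by definition, completing the proof.
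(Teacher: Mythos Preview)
Your proposal is correct and follows essentially the same route as the paper: the supermartingale bound for sufficiency, and for necessity the Takaoka--Schweizer deflators on finite horizons pasted multiplicatively, with a localization argument (the paper writes the concatenation as $Z_t = \prod_{k=1}^{n(t)} Z^k_{k\wedge t}/Z^k_{(k-1)\wedge t}$ and builds the reducing sequence explicitly as a minimum of restricted stopping times, as in \cite{FontanaOksendalSulem15}). One small point: what \cite[Theorem~2.6]{TS14} gives directly is that $Z^TS$ is a sigma-martingale, and the paper invokes \cite[Corollary~3.5]{AnselStricker94} to pass to ``$Z^TX$ is a local martingale for every $X\in\X$''---you have folded this step into your statement of the Takaoka--Schweizer result.
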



\begin{rem}	\label{rem:comp_mostovyi}
In \cite{mostovyi2015}, it is assumed that 
\begin{equation}\label{NFLVF}
\{Z\in\Z : Z\text{ is a martingale}\}\neq\emptyset,
\end{equation}
which is stronger than \eqref{NUPBR} by Proposition \ref{prop:NUPBR}. A classical example where \eqref{NUPBR} holds but \eqref{NFLVF} fails is provided by the three-dimensional Bessel process (see e.g. \cite{DelbaenSchachermayer95a}, \cite[Example 2.2]{Larsen09}, and \cite[Example 4.6]{KaratzasKardaras07}).
\end{rem}

\subsection{Optimal investment with intermediate consumption}	\label{sec:invest}
We now proceed to show that the key conclusions of the utility maximization theory can be established under condition \eqref{NUPBR}. 
We assume that preferences are represented by a \emph{utility stochastic field} $U=U(t,\omega,x):[0,\infty)\times\Omega\times[0,\infty)\rightarrow\R\cup\{-\infty\}$ satisfying the following assumption. 

\begin{ass}	\label{ass:U}
For every $(t,\omega)\in[0,\infty)\times\Omega$, the function $x\mapsto U(t,\omega,x)$ is strictly concave, strictly increasing, continuously differentiable on $(0,\infty)$ and satisfies the Inada conditions
\[
\underset{x\downarrow0}{\lim} \, U'(t,\omega,x) = +\infty
\qquad\text{and}\qquad
\underset{x\rightarrow+\infty}{\lim} \, U'(t,\omega,x) = 0,
\]
with $U'$ denoting the partial derivative of $U$ with respect to its third argument. By continuity, at $x=0$ we {suppose} that $U(t,\omega,0)=\lim_{x\downarrow0}U(t,\omega,x)$ (note that this value may be $-\infty$). Finally, for every $x\geq0$, the stochastic process $U(\cdot,\cdot,x)$ is optional.
\end{ass}

To a utility stochastic field $U$ satisfying Assumption \ref{ass:U}, we  associate the \emph{primal value function}, defined as
\be	\label{value}
u(x) := \sup_{c\in\A(x)}\expec\left[\int_0^{\infty}U(t,\omega,c_t)\,\ud\kappa_t\right],\quad x>0,
\ee
with the convention $\expec[\int_0^{\infty}U(t,\omega,c_t)\,\ud\kappa_t]:=-\infty$ if $\expec[\int_0^{\infty}U^-(t,\omega,c_t)\,\ud\kappa_t]=+\infty$.


In order to construct the dual value function, we define as follows the \emph{stochastic field $V$ conjugate to $U$}:
\[
V(t,\omega,y) := \sup_{x>0}\bigl(U(t,\omega,x)-xy\bigr),
\qquad (t,\omega,y)\in[0,\infty)\times\Omega\times[0,\infty).
\]
We also introduce the following set of dual processes:
\begin{align*}
\Y(y) := \cl\bigl\{Y :\; Y\text{ is c\`adl\`ag adapted and }
0\leq Y\leq yZ \text{ $(\ud\kappa\times\PP)$-a.e. for some }Z\in\Z\bigr\},
\end{align*}
where the closure is taken in the topology of convergence in measure $(\ud\kappa\times\PP)$ on the space of real-valued optional processes.  We write $\Y:=\Y(1)$ for brevity.
The value function of the dual optimization problem (\emph{dual value function}) is then defined as
\be	\label{dual_value}
v(y) := \inf_{Y\in\Y(y)}\expec\left[\int_0^{\infty}V(t,\omega,Y_t)\,\ud\kappa_t\right]{,\quad y>0},
\ee
with the convention $\expec[\int_0^{\infty}V(t,\omega,Y_t)\,\ud\kappa_t]:=+\infty$ if $\expec[\int_0^{\infty}V^+(t,\omega,Y_t)\,\ud\kappa_t]=+\infty$.
We are now in a position to state the following theorem, 
which is the main result of this paper.
\begin{thm}	\label{thm:main}
Assume that conditions \eqref{clock} and \eqref{NUPBR} hold true and let $U$ be a utility stochastic field satisfying Assumption \ref{ass:U}. Let us also suppose that
\be	\label{eq:finiteness}
v(y)<\infty\quad\text{ for every }y>0
\quad\text{and}\quad
u(x)>-\infty\quad\text{ for every }x>0.
\ee
Then the primal value function $u$ and the dual value function $v$ defined in \eqref{value} and \eqref{dual_value}, respectively, satisfy the following properties:
\begin{enumerate}
\item[(i)] 
$u(x)<\infty$, for every $x>0$, and $v(y)>-\infty$, for every $y>0$. The functions $u$ and $v$ are conjugate, i.e.,
\begin{displaymath}
v(y) = \underset{x>0}{\sup} \bigl(u(x)-xy\bigr),\quad y>0, \qquad
u(x) = \underset{y>0}{\inf} \bigl(v(y)+xy\bigr),\quad x>0;
\end{displaymath}
\item[(ii)]
the functions $u$ and $-v$ are continuously differentiable on $(0,\infty)$, strictly concave, strictly increasing and satisfy the Inada conditions
\begin{displaymath}
\begin{array}{rclccrcl}
\underset{x\downarrow0}{\lim} \, u'(x) & = &+\infty, &\quad&
\underset{y\downarrow0}{\lim} \, -v'(y) &=& +\infty,	\\
\underset{x\rightarrow+\infty}{\lim} \, u'(x) &=& 0, &\quad&
\underset{y\rightarrow+\infty}{\lim} \, -v'(y) &=& 0. \\
\end{array}
\end{displaymath}
\end{enumerate}
Moreover, for every $x>0$ and $y>0$, the solutions $\hat{c}(x)$ to \eqref{value} and $\hat{Y}(y)$ to \eqref{dual_value} exist and are unique and, if $y=u'(x)$, we have the dual relations
\[
\hat{Y}_t(y)(\omega) = U'\bigl(t,\omega,\hat{c}_t(x)(\omega)\bigr),
\qquad \ud\kappa\times\PP\text{-a.e.},
\]
and
\[
\mathbb{E}\left[\int_0^{\infty}\hat{c}_t(x)\hat{Y}_t(y)\,\ud\kappa_t\right] = xy.
\]
Finally, the dual value function $v$ can be represented as
\be	\label{eq:v_defl}
v(y) = \inf_{Z\in\mathcal Z}\expec\left[\int_0^{\infty}V(t,\omega,yZ_t)\,\ud\kappa_t\right],\quad y>0.
\ee
\end{thm}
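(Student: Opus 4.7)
My plan is to reduce Theorem~\ref{thm:main} to the abstract utility maximization results of \cite{mostovyi2015}, which deliver conjugacy, Inada conditions, differentiability of $u$ and $-v$, and existence and uniqueness of primal and dual optimizers, once one has a primal set $\A(x)$ and a dual set $\Y(y)$ satisfying appropriate bipolar relations, closedness under countable convex combinations, and $L^0$-solidity, together with the finiteness condition \eqref{eq:finiteness}. The abstract results of \cite{mostovyi2015} are formulated under \eqref{NFLVF}; the job here is to check that all abstract hypotheses persist when \eqref{NFLVF} is weakened to \eqref{NUPBR}, provided one works with the dual domain $\Y(y)$ built from the (generally larger) set $\Z$ of equivalent local martingale deflators rather than from equivalent martingale deflators.

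The bulk of the work is carried out in the three preparatory results announced in the introduction. First, Proposition~\ref{prop:NUPBR} guarantees $\Z\neq\emptyset$, so the generating set of $\Y(y)$ is nonempty. Second, Lemma~\ref{lem:closedness} supplies closedness under countable convex combinations of $\Y(y)$, which is essential to apply the abstract duality machinery. Third, Lemma~\ref{lem:cons} gives the polarity description, which I would use to establish the bipolar identity: $c\in\A(x)$ if and only if $\expec\bigl[\int_0^{\infty}c_tY_t\,\ud\kappa_t\bigr]\leq xy$ for all $Y\in\Y(y)$, and symmetrically for $\Y(y)$. The nontrivial direction of the budget inequality for $Y=yZ$ with $Z\in\Z$ follows from the fact that $ZV$ is a nonnegative local martingale, hence supermartingale, for any nonnegative value process $V$ associated with a portfolio $(x,H,c)$, so that integration by parts together with $V\geq 0$ yields $\expec\bigl[\int_0^{\infty}c_tZ_t\,\ud\kappa_t\bigr]\leq x$; the general case follows by pointwise domination of $Y\in\Y(y)$ by $yZ$ and a Fatou argument along the defining sequence of the closure in $(\ud\kappa\times\PP)$-convergence.

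With these bipolar relations and closedness properties in hand, the abstract theorems of \cite{mostovyi2015} apply verbatim and yield parts (i)--(ii) of the theorem, the existence and uniqueness of $\hat c(x)$ and $\hat Y(y)$, the dual relation $\hat Y_t(y)=U'(t,\omega,\hat c_t(x))$, and the budget equality. For the representation \eqref{eq:v_defl}, since $V(t,\omega,\cdot)$ is decreasing, the domination $Y\leq yZ$ built into the definition of $\Y(y)$ forces $V(t,\omega,Y_t)\geq V(t,\omega,yZ_t)$, while the generating elements $yZ$ themselves lie in $\Y(y)$; passing to the infimum and using lower semicontinuity along the defining sequences of the closure (Fatou's lemma) shows that the infimum over $\Y(y)$ equals the infimum over $\{yZ:Z\in\Z\}$. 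I expect the main obstacle to be the budget inequality step: under only a local martingale deflator, one no longer has the convenient martingale property of $ZV$ that \cite{mostovyi2015} exploits, and the supermartingale substitute has to be combined carefully with the closure operation defining $\Y(y)$ and with the $L^0$-solidity required by the abstract framework.
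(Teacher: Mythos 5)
Your overall strategy coincides with the paper's: verify the hypotheses of the abstract theorems of \cite{mostovyi2015} for the pair $(\A,\Y)$ by means of Proposition~\ref{prop:NUPBR}, Lemma~\ref{lem:cons} and Lemma~\ref{lem:closedness}, and then read off all assertions. However, two steps are genuinely missing or wrong. First, the bipolar relation for the dual set: your argument (domination $Y\leq yZ$ plus Fatou along the defining sequences) only yields the easy inclusion, namely that every $Y\in\Y$ satisfies $\expec[\int_0^\infty c_tY_t\,\ud\kappa_t]\leq 1$ for all $c\in\A$. The abstract theorems also require the converse: every nonnegative optional process satisfying these inequalities must belong to $\Y$. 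This cannot be obtained from Lemma~\ref{lem:cons} alone, and you assert it ``symmetrically'' without any argument; the paper obtains it from the bipolar theorem of \cite{BrannathSchachermayer99} on $\LL^0_+(\ud\kappa\times\PP)$, combined with relation \eqref{eq:cons} of Lemma~\ref{lem:closedness} and the argument of part (ii) of Proposition 4.4 in \cite{mostovyi2015}, which identifies the abstract closed convex solid hull of $\Z$ with the concretely defined $\Y$ (a priori the hull consists of optional processes, while $\Y$ is generated by c\`adl\`ag adapted ones, so this identification is a real step).

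Second, your proof of \eqref{eq:v_defl} does not work as stated. To show that passing to the closure does not lower the dual infimum you need, for $Y=\lim_n Y^n$ in measure with $Y^n\leq yZ^n$, an inequality of the form $\expec[\int_0^\infty V(t,\omega,Y_t)\,\ud\kappa_t]\geq\liminf_n\expec[\int_0^\infty V(t,\omega,Y^n_t)\,\ud\kappa_t]$, i.e. an upper-semicontinuity-type bound; Fatou's lemma (even if $V$ were bounded from below, which it need not be) gives the opposite inequality and is of no use here. The paper does not argue this by hand: \eqref{eq:v_defl} is precisely the content of the abstract Theorem 3.3 of \cite{mostovyi2015}, whose hypothesis is that the generating set $\Z$ is closed under countable convex combinations --- this, rather than ``countable convexity of $\Y(y)$'', is the role of Lemma~\ref{lem:closedness}. (A smaller slip: for a portfolio $(x,H,c)$ with $c\neq 0$, the deflated value process $ZV$ is not a local martingale; the budget bound comes from the supermartingale $Z(x+\int H\,\ud S)$ together with integration by parts and localization, as in the proof of Lemma~\ref{lem:cons} --- but since you may quote that lemma, this is harmless.)
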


\begin{rem}\label{bipolar}
For $\kappa$ corresponding to  maximization of utility from terminal wealth, it can be checked that the sets $\mathcal A$ and $\mathcal Y$ satisfy the assumptions of \cite[Proposition 3.1]{KS99}. This implies that for a deterministic utility $U$ satisfying the Inada conditions and such that  $AE(U) < 1$ (in the terminology of \cite{KS99}), under the additional assumption of finiteness of $u(x)$ for some $x>0$, the assertions of \cite[Theorem 2.2]{KS99} hold under (\ref{NUPBR}) (and possibly without NFLVR). This is a consequence of ``abstract'' Theorems 3.1 and 3.2 in~\cite{KS99} that also apply under \eqref{NUPBR}.
Note also that the condition $u(x)>-\infty$ for all $x>0$ trivially holds if $U$ is a deterministic real-valued utility function. In particular, this is the case in the setting of \cite{KS2003}, where it is shown that the finiteness of the dual function $v$ acts as a necessary and sufficient condition for the validity of the key assertions of the theory.
\end{rem}

\section{Proofs}		\label{sec:proofs}
\begin{proof}[Proof of Proposition \ref{prop:NUPBR}]
Suppose that \eqref{NUPBR} holds. Then, for every $n\in\N$, the set $\X_n$ is bounded in $\LL^0$ and, by \cite[Theorem 2.6]{TS14}, there exists a strictly positive c\`adl\`ag local martingale $Z^n$ such that $Z^n_0=1$ (since $\F_0$ is   trivial) and the $\R^d$-valued process $Z^nS$ is a sigma-martingale on $[0,n]$. As a consequence of \cite[Corollary 3.5]{AnselStricker94} (see also \cite[Remark 2.4]{CDM15}), it holds that $Z^nX$ is a local martingale on $[0,n]$, for every $X\in\X$ and $n\in\N$. For all $t\geq0$, let then $n(t):=\min\{n\in\N : n>t\}$ and define the c\`adl\`ag process $Z=(Z_t)_{t\geq0}$ via
\[
Z_t := \prod_{k=1}^{n(t)}\frac{Z^k_{k\wedge t}}{Z^k_{(k-1)\wedge t}},
\qquad t\geq0.
\]
We now claim that $Z\in\Z$. Since $X\equiv 1\in\X$ and in view of \cite[Lemma I.1.35]{JacodShiryaev03}, it suffices to show that, for every $X\in\X$, the process $ZX$ is a local martingale on $[0,m]$, for each $m\in\N$. 
Fix $m\in\N$. Consider an arbitrary $X\in\X$ and let $\{\tau^n_k\}_{k\in\N}$ be a localizing sequence for the local martingale $Z^nX$ on $[0,n]$, for each $n\in\{1,\ldots,m\}$. Let $\tau^j_{k\;\{\tau^j_k<j\}}:=\tau^j_k\ind_{\{\tau^j_k<j\}}+\infty\ind_{\{\tau^j_k\geq j\}}$, 
for $j=1,\ldots,m$ and $k\in\N$, and define the stopping times 
\[
T^m_k := \min\left\{\tau^1_{k\;\{\tau^1_k<1\}},\ldots,\tau^m_{k\;\{\tau^m_k<m\}},m\right\},
\qquad k\in\N.
\]
Similarly as in \cite[proof of Theorem 4.10]{FontanaOksendalSulem15}, it can be readily verified that the stopped process $(ZX)^{T^m_k}$ is a martingale on $[0,m]$, for all $k\in\N$. Since $\lim_{k\rightarrow+\infty}\PP(T^m_k=m)=1$, this shows that $ZX$ is a local martingale on $[0,m]$. By the arbitrariness of $m$, this proves the claim.

To prove the converse implication, note that, for any $X\in\mathcal{X}$ and $Z\in\mathcal{Z}$, the process $ZX$ is a supermartingale and, hence, for every $T\in \mathbb R_{+}$, it holds that $\mathbb{E}[Z_TX_T]\leq 1$. This shows that the set $Z_T\mathcal{X}_T$ is bounded in $\LL^1$ and, hence, the set $\mathcal{X}_T$ is bounded in $\LL^0$.
\end{proof}


Let us now turn to the proof of Theorem \ref{thm:main}. Together with the abstract results established in \cite[Section 3]{mostovyi2015}, the key step is represented by Lemma \ref{lem:cons} below, which generalizes \cite[Lemma 4.2]{mostovyi2015} by relaxing the no-arbitrage-type requirement into condition \eqref{NUPBR}. 

\begin{lem}	\label{lem:cons}
Let $c$ be a nonnegative optional process and $\kappa$ a stochastic clock. Under assumptions \eqref{clock} and \eqref{NUPBR}, the following conditions are equivalent:
\begin{enumerate}
\item[(i)] $c\in\A$;
\item[(ii)] $\sup_{Z\in\Z}\expec[\int_0^{\infty}c_tZ_t\,\ud\kappa_t]\leq 1$.
\end{enumerate}
\end{lem}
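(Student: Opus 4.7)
My plan is to prove each direction separately. The implication $(i)\Rightarrow(ii)$ will follow from a direct integration-by-parts argument combined with the supermartingale property of $ZX$ for $Z\in\Z$ and $X\in\X$. Given $c\in\A$, I take an associated predictable $S$-integrable $H$ with $V := 1 + \int H\,\ud S - \int c\,\ud\kappa \geq 0$, set $C_t := \int_0^t c_u\,\ud\kappa_u$, and observe that $X := V + C = 1 + \int H\,\ud S$ belongs to $\X$. For every $Z\in\Z$, the process $ZX$ is a nonnegative local martingale, hence a supermartingale, so $\expec[Z_t X_t]\leq 1$. Integration by parts applied to $Z$ and the finite-variation process $C$ yields
\[
Z_t C_t \;=\; \int_0^t Z_u c_u\,\ud\kappa_u \;+\; \int_0^t C_{u-}\,\ud Z_u,
\]
where the last integral is a local martingale by local boundedness of $C_-$. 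Localizing along a common reducing sequence $\tau_m\uparrow\infty$ for $Z$, $ZX$, and $\int C_-\,\ud Z$, and using $Z_{t\wedge\tau_m}V_{t\wedge\tau_m}\geq 0$, I obtain $\expec[\int_0^{t\wedge\tau_m} Z_u c_u\,\ud\kappa_u] \leq \expec[Z_{t\wedge\tau_m}X_{t\wedge\tau_m}] \leq 1$; monotone convergence first in $m$ and then in $t$ yields (ii).

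For the reverse implication $(ii)\Rightarrow(i)$, the plan is to reduce to \cite[Lemma 4.2]{mostovyi2015}, which establishes the same polarity under the stronger condition \eqref{NFLVF}, by means of a localization and change of measure. Using Proposition \ref{prop:NUPBR} I pick some $Z\in\Z$ together with a reducing sequence $\sigma_n\uparrow\infty$ such that $Z^{\sigma_n}$ is a uniformly integrable $\PP$-martingale. Defining $\QQ^n$ on $\F_{\sigma_n}$ via $\ud\QQ^n/\ud\PP := Z_{\sigma_n}$ turns the stopped market $(S^{\sigma_n},\kappa^{\sigma_n})$ into one in which the constant process $1$ is an equivalent \emph{martingale} deflator, so the Mostovyi framework with \eqref{NFLVF} applies. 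A Bayes computation using the correspondence $Z''\in\Z \mapsto Z''/Z^{\sigma_n} \in \Z^{\QQ^n}$ transports condition (ii) to its analogue for the truncated process $c^n := c\,\ind_{\dbracc{0,\sigma_n}}$ under $\QQ^n$; invoking \cite[Lemma 4.2]{mostovyi2015} then produces a predictable $S$-integrable $H^n$ with $V^n := 1 + \int H^n\,\ud S - \int c^n\,\ud\kappa \geq 0$ on $\dbracc{0,\sigma_n}$. The final step is to glue the $H^n$ into a single $S$-integrable $H$ valid on $[0,\infty)$ whose associated value process is nonnegative and finances the full $c$.

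The main obstacle will be this gluing step: the $H^n$ are not automatically consistent across $n$, so one cannot simply take a projective limit. The cleanest route I foresee is a Koml\'os/convex-compactness argument on the sequence of value processes, leveraging that under \eqref{NUPBR} the sets $\X_T$ are bounded in $\LL^0$ and, by \cite[Theorem 2]{Kardaras10}, also closed, so that a suitable subsequence of forward convex combinations of the $V^n$'s converges to a nonnegative limit $V$ from which $H$ can be recovered by a representation theorem for nonnegative $S$-integral processes. An alternative that I would pursue if this patching proves awkward is to bypass the measure-change entirely and appeal to a superhedging/optional-decomposition result directly formulated under \eqref{NUPBR}, in the spirit of \cite[Theorem 2.6]{TS14} or \cite{KaratzasKardaras07}; this avoids patching but requires an analogue adapted to the finite-variation consumption component.
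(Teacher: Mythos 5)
Your first direction, (i)$\Rightarrow$(ii), is essentially the paper's own argument (integration by parts, the supermartingale property of $Z(1+\int H\,\ud S)$, localization of $\int C_-\,\ud Z$, monotone convergence) and is fine. The problems are in (ii)$\Rightarrow$(i), where your sketch has two genuine gaps. First, the transport of hypothesis (ii) to the stopped market under $\QQ^n$ goes the wrong way. Hypothesis (ii) is a bound over \emph{global} deflators $Z''\in\Z$; to invoke the Mostovyi-type polarity for the stopped market you must bound $\expec^{\QQ^n}[\int_0^{\sigma_n}c_tY_t\,\ud\kappa_t]$ for \emph{every} dual element $Y$ of the stopped market, i.e.\ you need that each such $Y$ (equivalently each $\QQ^n$-deflator, or each sigma-martingale measure for $S^{\sigma_n}$) can be pasted with the tail of $Z$ into an element of $\Z$, so that (ii) applies to it. Your stated correspondence $Z''\in\Z\mapsto Z''/Z^{\sigma_n}\in\Z^{\QQ^n}$ only maps global deflators into stopped ones and does not give this surjectivity; the missing ingredient is precisely a pasting lemma of Stricker--Yan type (the paper uses \cite[Lemma 2.3]{stricker1998some} to show $Z'=M^{\varrho_n}Z(Z^{\varrho_n})^{-1}\in\Z$ for the density process $M$ of any $\QQ\in\M_\sigma(S^{\varrho_n})$). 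Without it, the sup in the stopped market is not controlled by (ii).

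Second, the gluing step — which you yourself flag as the main obstacle — is the heart of the proof and your proposed fix does not close it. Boundedness and closedness in $\LL^0$ of $\X_T$ give convergence of convex combinations of the terminal (or time-$T$) wealths, but what is needed is a statement at the level of \emph{processes}: a Fatou-type limit $Y$ of convex combinations of the local superhedging portfolios, dominating $C$ on all of $[0,\infty)$ and remaining a supermartingale under every deflator, together with a result that writes $Y$ (or $Y^{\varrho_n}$) as $Y_0+\int G^n\,\ud S^{\varrho_n}-B^n$. The "representation theorem for nonnegative $S$-integral processes" you appeal to is exactly the optional decomposition/superhedging duality in the absence of an equivalent martingale measure for the unstopped process; this is what the paper supplies via \cite[Proposition 4.2, Lemma 5.2 and Theorem 4.1]{FK97} applied to the stopped markets $S^{\varrho_n}$ (where $\M_\sigma(S^{\varrho_n})\neq\emptyset$), after which the integrands $G^n$ coming from decompositions of the \emph{single} limit process $Y$ can be concatenated as $G=G^1\ind_{\dbracc{0,\varrho_1}}+\sum_n G^{n+1}\ind_{\dbraoc{\varrho_n,\varrho_{n+1}}}$. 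A Koml\'os argument on the $V^n$ alone, without such a decomposition theorem, does not produce an integrand $H$ with $1+\int H\,\ud S\geq C$ on $[0,\infty)$, because the limit need not a priori be a stochastic integral minus an increasing process. Your alternative suggestion (a superhedging result formulated directly under \eqref{NUPBR}) is indeed the right idea, but it is the content that must be proved or correctly cited, not a detail that can be deferred.
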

\begin{proof}
If $c \in \A$, there exists an $\R^d$-valued predictable $S$-integrable process $H$ such that 
\[
1 +  \int_0^t {H_u\,\ud S_u}  \ge \int_0^t {c_u\,\ud\kappa_u} \ge 0, 
\qquad t \ge 0.
\]
We define $C_t := \int_0^t {c_u\,\ud\kappa_u}$, $t \ge 0$, and observe that $C$ is an increasing process. For an arbitrary $Z\in\Z$, the process $(\int_0^t {C_{u-}\,\ud Z_u})_{t\geq0}$ is a local martingale and we let $\{\tau_n\}_{n\in\N}$ be a localizing sequence such that $(\int {C_{-}\,\ud Z})^{\tau_n}$ is a uniformly integrable martingale, for every $n\in\N$. 
Using the supermartingale property of $Z(1 +  \int{H\,\ud S})$, we obtain for every $n \in\N$
\[
1 \ge \expec\left[ Z_{\tau_n} \left( 1 + \int_0^{\tau_n} {H_u\,\ud S_u} \right)  \right] 
\ge \expec \left[ Z_{\tau_n} C_{\tau_n} \right]
= \expec\left[\int_0^{\tau_n} {Z_u\,\ud C_u} + \int_0^{\tau_n} {C_{u-}\,\ud Z_u}\right],
\]
where the last equality follows from the integration by parts formula.
Since $\{\tau_n\}_{n\in\N}$ is a localizing sequence for $\int {C_{-}\,\ud Z}$, it holds that $\expec[ \int_0^{\tau_n} {C_{u-}\,\ud Z_u}] =0$, for every $n\in\N$. 
Hence: 
\[
1 \ge \expec\left[\int_0^{\tau_n} {Z_u\,\ud C_u}\right],
\qquad\text{ for every }n\in\N.
\]
By the monotone convergence theorem, we get that 
\[
1 \ge \lim_{n\to \infty} \expec\left[\int_0^{\tau_n} {Z_u\,\ud C_u}\right] 
= \expec\left[\int_0^{\infty} {Z_u\,\ud C_u}\right].
\]
Since $Z\in\Z$ is arbitrary, this proves the implication (i)$\Rightarrow$(ii).

Suppose now that $\sup_{Z\in\Z} \expec[ \int_0^{\infty} {c_tZ_t\,\ud\kappa_t} ] \le 1$. 
Take an arbitrary $Z\in\Z$ and let $\{\varrho_n\}_{n\in\N}$ be a sequence of  bounded  stopping times increasing to infinity $\PP$-a.s., such that $Z^{\varrho_n}$ is a uniformly integrable martingale, for each $n\in\N$. 
Denoting
$
\M_{\sigma}(S) 
:= \bigl\{\QQ\sim\PP\text{ : $S$ is a $\QQ$-sigma-martingale}\bigr\}
$,
one can show that 
$\M_{\sigma}(S^{\varrho_n})\neq\emptyset$, for every $n\in\N$.
Let $\QQ\in\M_{\sigma}(S^{\varrho_n})$ and denote by $M=(M_t)_{t\geq0}$ its c\`adl\`ag density process (i.e., $M_t= \ud\QQ|_{\F_t}/\ud\PP|_{\F_t}$, $t\geq0$). 
Letting $Z':=M^{\varrho_n}Z(Z^{\varrho_n})^{-1}$, 
{\cite[Lemma 2.3]{stricker1998some} implies that $Z'\in\Z$.}
Therefore, for any stopping time $\tau$,
\[
\expec^{\QQ}[C_{\tau\wedge\varrho_n}]
= \expec[M_{\tau\wedge\varrho_n}C_{\tau\wedge\varrho_n}]
= \expec[Z'_{\tau\wedge\varrho_n}C_{\tau\wedge\varrho_n}] \leq 1,
\]
where the last inequality follows from the assumption that $\sup_{Z\in\Z} \expec[ \int_0^{\infty} {c_tZ_t\,\ud\kappa_t} ] \le 1$ by the same arguments used in the first part of the proof together with an application of Fatou's lemma. 
%
As a consequence, we get
\[
\sup_{\QQ\in\M_{\sigma}(S^{\varrho_n})}\sup_{\tau\in\T} \, \expec^{\QQ}[C_{\tau\wedge\varrho_n}] \leq 1,
\]
where $\T$ is the set of all stopping times.
\cite[Proposition 4.2]{FK97} 
then gives the existence of an adapted c\`adl\`ag process $V^n$ such that $V^n_t\geq C_{t\wedge\varrho_n}$, for every $t\geq0$, and admitting a decomposition of the form
\[
V^n_t = V^n_0 + \int_0^tH^n_u\,\ud S^{\varrho_n}_u - A^n_t,
\qquad t\geq0,
\]
where $H^n$ is an $\R^d$-valued predictable $S^{\varrho_n}$-integrable process, $A^n$ is an adapted increasing process with $A^n_0=0$ and $V^n_0=\sup_{\QQ\in\M_{\sigma}(S^{\varrho_n}),\tau\in\T}\, \expec^{\QQ}[C_{\tau\wedge\varrho_n}]\leq1$.
Therefore, for every $n\in\N$, we obtain
\[
1 +  \int_0^t {H^n_u\,\ud S_u}
\geq V^n_0 +  \int_0^t {H^n_u\,\ud S_u}
= V^n_t + A^n_t
\geq V^n_t
\geq C_t,
\qquad 0\leq t\leq \varrho_n.
\]
Let $\bar{H}^n:=H^n\ind_{\dbracc{0,\varrho_n}}$, for all $n\in\N$. By \cite[Lemma 5.2]{FK97}, we can construct a sequence of processes $\{Y^n\}_{n\in\N}$, with $Y^n\in\conv(1+\int\bar{H}^n\,\ud S,1+\int\bar{H}^{n+1}\,\ud S,\ldots)$, $n\in\N$, and a c\`adl\`ag process $Y$ such that $\{ZY^n\}_{n\in\N}$ is Fatou convergent
to a supermartingale $ZY$, for every strictly positive c\`adl\`ag local martingale $Z$ such that $ZX$ is a supermartingale for every $X\in\X$. Note that $Y_t\geq C_t$, for all $t\geq0$, and $Y_0\leq 1$. 
Similarly as above, applying \cite[Theorem 4.1]{FK97} to the stopped process $Y^{\varrho_n}$, for $n\in\N$, we  obtain the decomposition
\[
Y_t^{\varrho_n} = Y_0 + \int_0^tG^n_u\,\ud S_u^{\varrho_n} - B^n_t,
\qquad t \geq 0,
\]
where $G^n$ is an $\R^d$-valued predictable $S^{\varrho_n}$-integrable process and $B^n$ is an adapted increasing process with $B^n=0$, for  $n\in\N$. 
Letting 
\[
G := G^1+\sum_{n=1}^{\infty}(G^{n+1}-G^n)\ind_{\dbraoo{\varrho_n,+\infty}}
= G^1\ind_{\dbracc{0,\varrho_1}}+\sum_{n=1}^{\infty}G^{n+1}\ind_{\dbraoc{\varrho_n,\varrho_{n+1}}},
\] 
it follows that $1+\int_0^tG_u\,\ud S_u\geq C_t$, for all $t\geq0$, thus establishing the implication  (ii)$\Rightarrow$(i).
\end{proof}
 
We are now in a position to complete the proof of Theorem~\ref{thm:main}, which generalizes the results of  \cite[Theorems 2.3 and 2.4]{mostovyi2015} to the case where only \eqref{NUPBR} is assumed to hold. 

\begin{lem}	\label{lem:closedness}
Under \eqref{NUPBR}, the set $\Z$ is closed under countable convex combinations. If in addition \eqref{clock} holds, then 
for every $c\in\A$, we have
\be	\label{eq:cons}
\sup_{Z\in\Z}\expec\left[\int_0^{\infty}c_tZ_t\,\ud\kappa_t\right]
= \sup_{Y\in\Y}\expec\left[\int_0^{\infty}c_tY_t\,\ud\kappa_t\right] \leq 1.
\ee
\end{lem}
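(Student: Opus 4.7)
The plan is to handle the two claims in sequence. For the closedness of $\Z$ under countable convex combinations, I would take $\{Z^n\}_{n \in \N} \subset \Z$ and nonnegative weights $\{\alpha_n\}_{n \in \N}$ with $\sum_n \alpha_n = 1$, and consider the pointwise sum $Z := \sum_n \alpha_n Z^n$ together with its partial-sum approximations $Z^{(N)} := \sum_{n=1}^N \alpha_n Z^n$. Strict positivity and $Z_0 = 1$ follow directly from the corresponding properties of the $Z^n$, while the bound $\expec[Z_t] \leq 1$ coming from the supermartingale property of each $Z^n$ gives well-definedness in $\LL^1$ and the existence of a c\`adl\`ag modification. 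The core step is to show $ZX$ is a local martingale for every $X \in \X$: each $Z^{(N)}$ is a finite nonnegative linear combination of local martingales and so itself an element of $\Z$ (a common localizing sequence for $Z^1 X, \ldots, Z^N X$ is obtained by taking the minimum of the individual localizing sequences), and hence $Z^{(N)} X$ is a local martingale. I would then construct a localizing sequence for $ZX$ (e.g.\ $\tau_k := \inf\{t : Z_t X_t \geq k\} \wedge k$) along which the martingale identity for the partial sums passes to the limit by monotone convergence.

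For the identity in \eqref{eq:cons}, the inclusion $\Z \subseteq \Y$ is immediate (any $Z \in \Z$ trivially satisfies $Z \leq 1 \cdot Z$ and so lies in the pre-closure generating set of $\Y(1)$), giving $\sup_{Z \in \Z}\expec[\int_0^\infty c_t Z_t \,\ud\kappa_t] \leq \sup_{Y \in \Y}\expec[\int_0^\infty c_t Y_t \,\ud\kappa_t]$. For the reverse inequality, I would take an arbitrary $Y \in \Y$ and approximate it by a sequence $\{Y^m\}_m$ converging in $(\ud\kappa \times \PP)$-measure with $0 \leq Y^m \leq Z^m$ for some $Z^m \in \Z$; passing to an almost-everywhere convergent subsequence $\{Y^{m_k}\}_k$ and applying Fatou's lemma would yield $\expec[\int_0^\infty c_t Y_t \,\ud\kappa_t] \leq \liminf_k \expec[\int_0^\infty c_t Y^{m_k}_t \,\ud\kappa_t] \leq \liminf_k \expec[\int_0^\infty c_t Z^{m_k}_t \,\ud\kappa_t] \leq \sup_{Z \in \Z}\expec[\int_0^\infty c_t Z_t \,\ud\kappa_t]$. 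Finally, the bound by $1$ is exactly the conclusion of Lemma \ref{lem:cons} applied to $c \in \A$.

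The hard part will be the closure step, specifically verifying that the pointwise sum $Z$ defines an element of $\Z$: a priori, the monotone limit of local martingales is only guaranteed to be a supermartingale, and the argument must exploit the specific structure of the sum (nonnegativity of each $Z^n X$, the uniform bound $\expec[(ZX)_t] \leq 1$, and the local martingale property of every partial sum) to upgrade the limiting supermartingale $ZX$ to a local martingale on each interval $[0, \tau_k]$. Once this is established, the remaining steps are essentially approximation and Fatou-type arguments and are routine.
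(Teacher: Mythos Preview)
Your approach is essentially the same as the paper's: partial sums $Z^{(N)}$, monotone passage to the limit $Z$, and then the Fatou/approximation argument for \eqref{eq:cons}. The difference is one of packaging. For the key step---that the increasing limit of the nonnegative local martingales $Z^{(N)}X$ is again a local martingale---the paper does not argue directly but invokes \cite[Proposition~5.1]{KLP14}, and obtains the c\`adl\`ag property of the limit from \cite[Theorem~VI.18]{DelMey82}. For \eqref{eq:cons} the paper simply refers to the argument of \cite[Lemma~4.3]{mostovyi2015}, which is precisely the Fatou-along-an-a.e.-convergent-subsequence computation you wrote out.

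One point to be careful about in your direct argument: your proposed localizing sequence $\tau_k=\inf\{t:Z_tX_t\geq k\}\wedge k$ does \emph{not} make $(Z^{(N)}X)^{\tau_k}$ bounded, because $ZX$ (and hence $Z^{(N)}X$) can jump above $k$ at $\tau_k$. So ``monotone convergence along $\tau_k$'' is not quite enough on its own; you need an additional domination step. Concretely, if $\{\sigma_m\}$ localizes $Z^{(N)}X$, then on $\{\sigma_m<\tau_k\}$ one has $(Z^{(N)}X)_{\sigma_m}\leq (ZX)_{\sigma_m}<k$, while on $\{\sigma_m\geq\tau_k\}$ one has $(Z^{(N)}X)_{\tau_k\wedge\sigma_m}=(Z^{(N)}X)_{\tau_k}$, which is integrable by the supermartingale property. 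This gives an integrable dominating function $k+(Z^{(N)}X)_{\tau_k}$ and lets you pass to the limit in $m$ to get $\expec[(Z^{(N)}X)_{\tau_k}]=(Z^{(N)}X)_0$; then monotone convergence in $N$ finishes. This is exactly the content of the cited result from \cite{KLP14}, so your plan is sound once this gap is filled.
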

\begin{proof}
Let $\{Z^n\}_{n\in\N}$ be a sequence of processes belonging to $\Z$ and $\{\lambda^n\}_{n\in\N}$ a sequence of positive numbers such that $\sum_{n=1}^{\infty}\lambda^n=1$. Letting $Z:=\sum_{n=1}^{\infty}\lambda^nZ^n$, we need to show that $Z\in\Z$.
For each $N\in\N$, define $\widetilde{Z}^N:=\sum_{n=1}^N\lambda^nZ^n$. For every $X\in\X$, $\{\widetilde{Z}^NX\}_{N\in\N}$ is an increasing sequence of nonnegative local martingales (i.e. $\widetilde Z^{N + 1}_tX_t \geq \widetilde Z^N_tX_t$, for all $N\in\N$ and $t\geq0$), such that $\widetilde{Z}^N_t X_t$ converges a.s. to $Z_t X_t$ as $N\rightarrow+\infty$, for every $t\geq0$, 
and $Z_0X_0 = 1$.
The local martingale property of $ZX$ then follows from \cite[Proposition 5.1]{KLP14} (note that its proof carries over without modifications to the infinite horizon case), whereas  \cite[Theorem VI.18]{DelMey82} implies that $ZX$ is a c\`adl\`ag process. Since $X\in\X$ is arbitrary and $X\equiv 1\in\X$, this proves the claim.
Relation \eqref{eq:cons} follows by the same arguments used in \cite[Lemma 4.3]{mostovyi2015}.
\end{proof}

We denote by $\LL^0(\ud\kappa\times\PP)$ the linear space of equivalence classes of real-valued optional processes on the stochastic basis $(\Omega,\F,(\F_t)_{t\in[0,\infty)},\PP)$, equipped with the topology of convergence in measure $(\ud\kappa\times\PP)$. Let $\LL^0_+(\ud\kappa\times\PP)$ be the positive orthant of $\LL^0(\ud\kappa\times\PP)$.

\begin{proof}[Proof of Theorem~\ref{thm:main}]
The sets $\A$ and $\Y$ are convex solid subsets of $\LL^0_+(\ud\kappa\times\PP)$. By definition, $\Y$ is closed in the topology of convergence in measure $(\ud\kappa\times\PP)$. 
A simple application of Fatou's lemma together with Lemma \ref{lem:cons} allows to show that $\A$ is also closed in the same topology. Moreover, by the same arguments used in \cite[part (ii) of Proposition 4.4]{mostovyi2015}, Lemma \ref{lem:cons} and the bipolar theorem of \cite{BrannathSchachermayer99} imply that $\A$ and $\Y$ satisfy the bipolar relations
\begin{align}
c\in\A &\qquad\Longleftrightarrow\qquad
\expec\left[\int_0^{\infty}c_tY_t\,\ud\kappa_t\right]\leq 1
\quad\text{ for all }Y\in\Y,	\label{eq:bipolar1}\\
Y\in\Y &\qquad\Longleftrightarrow\qquad
\expec\left[\int_0^{\infty}c_tY_t\,\ud\kappa_t\right]\leq 1
\quad\text{ for all }c\in\A.	\label{eq:bipolar2}
\end{align}
Since $X\equiv 1\in\X$ and $\Z\neq\emptyset$, both $\A$ and $\Y$ contain at least one strictly positive element. 
In view of Lemma \ref{lem:closedness}, Theorem \ref{thm:main} then follows directly from \cite[Theorems 3.2 and 3.3]{mostovyi2015}. 
\end{proof}

\begin{rem}
We want to mention that Theorem~\ref{thm:main} can also be proved by means of a change-of-num\'eraire argument. Indeed, one can consider the market where quantities are denominated in units of the {\em num\'eraire portfolio} (whose existence is equivalent to NUPBR, see \cite{KaratzasKardaras07}) and apply \cite[Theorems 2.3 and 2.4]{mostovyi2015} directly in that market, for which the set \eqref{NFLVF} is non-empty. In this regard, see \cite[Section~4.7]{KaratzasKardaras07} and \cite{Kardaras13} in the case of maximization of expected (deterministic) utility from terminal wealth.
\end{rem}

\subsection*{Acknowledgement}{Huy N. Chau's work is supported by Natixis Foundation for Quantitative Research and by the Lend\"ulet grant LP2015-6 of the Hungarian Academy of Sciences. Oleksii Mostovyi's work is supported by the National Science Foundation under award number DMS-1600307.}

\bibliographystyle{alpha}
\bibliography{note_mostovyi_2}

\end{document}